\newtheorem{definition}{Definition}
\newtheorem{example}{Example}
\newtheorem{remark}{Remark}
\newtheorem{theorem}{Theorem}
\newtheorem{lemma}{Lemma}
\newtheorem{corollary}{Corollary}
\newcommand{\hB}{\mathcal{B}}
\newcommand{\hF}{\mathcal{F}}
\newcommand{\hO}{\mathcal{O}} 
\newcommand{\hU}{\mathcal{U}}
\newcommand{\hW}{\mathcal{W}}
\newcommand{\N}{\mathbb N} 
\newcommand{\R}{\mathbb R} 
\newcommand{\Z}{\mathbb Z} 
\newcommand{\C}{\mathbb C} 
\newcommand{\T}{\mathbb T} 
\newcommand{\mo}[1]{\left| #1 \right|} 
\newcommand{\rmi}{\mathrm{i}}
\newcommand{\rme}{\mathrm{e}}
\newcommand{\fii}{\varphi} 
\newcommand{\Om}{\Omega} 
\newcommand{\la}{\lambda} 
\newcommand{\hi}{\mathcal{H}} 
\newcommand{\ki}{\mathcal{K}} 
\newcommand{\li}{\mathcal{L}} 
\newcommand{\mi}{\mathcal{M}} 
\newcommand{\lh}{\mathcal{L(H)}} 
\renewcommand{\th}{\mathcal{T(H)}} 
\newcommand{\sh}{\mathcal{S(H)}} 
\newcommand{\uh}{\mathcal{U(H)}} 
\newcommand{\tr}[1]{\mathrm{tr}\left[#1\right]} 
\def\<{\langle} 
\def\>{\rangle} 
\newcommand{\kb}[2]{|#1 \rangle\langle #2|} 
\newcommand{\ip}[2]{\left\langle #1 | #2 \right\rangle} 
\newcommand{\set}[2]{\big\{#1\,\big|\,#2\big\}} 
\newcommand{\bo}[1]{\mathcal{B}\left(#1\right)} 
\newcommand{\br}{\mathcal B(\mathbb R)} 
\newcommand{\lin}{\mathrm{lin}\,} 
\def\d{{\mathrm d}} 
\newcommand{\ov}{\overline} 
\newcommand{\CHI}[1]{\ensuremath{ \chi\raisebox{-1ex}{$\scriptstyle #1$} }} 
\newcommand{\hd}{\int_\Omega^\oplus\mi(x)\d\mu(x)} 
\newcommand{\rank}{\mathrm{rank}\,} 
\newcommand{\G}{{\hat G}} 
\renewcommand{\H}{{\mathcal{H}(\lambda)}} 
\renewcommand{\bo}[1]{\Sigma_{#1}} 
\newcommand{\HS}{\ki_{\rm HS}^\oplus} 
\newcommand{\Qo}{\mathsf{Q}} 
\newcommand{\Po}{\mathsf{P}} 
\newcommand{\Mo}{\mathsf{M}} 
\begin{document}

\title[Generalized Coherent States and POVMs]{Generalized Coherent States and Extremal Positive Operator Valued Measures}

\author[Heinosaari \and Pellonp\"a\"a]{Teiko Heinosaari \and Juha-Pekka Pellonp\"a\"a \\ \\ \tiny{Turku Centre for Quantum Physics \\ Department of Physics and Astronomy \\ University of Turku \\ Finland}}

\begin{abstract}
We present a correspondence between positive operator valued measures (POVMs) and sets of generalized coherent states.
Positive operator valued measures describe quantum observables and, similarly to quantum states, also quantum observables can be mixed.
We show how the formalism of generalized coherent states leads to a useful characterization of extremal POVMs.
We prove that covariant phase space observables related to squeezed states are extremal, while the ones  related to number states are not extremal.
\end{abstract}

\maketitle


\section{Introduction}

Coherent states have become a standard tool in quantum optics and many other areas \cite{CS85}.
In this paper we are interested in their relation to positive operator valued measures (POVMs).
Every student of quantum mechanics learns some examples of positive operator valued measures related to coherent states, perhaps even without hearing this term.
The most prominent example is the Q-function, first introduced by Husimi \cite{Husimi40}.
The coherent states in this case are the minimal uncertainty states of position and momentum, and the related POVM is defined by an integral over the corresponding rank-1 operators \cite{OQP97}.

Another common example is the spin direction observable \cite{BuSc89}. 
Suppose that we choose a direction randomly and measure a spin component in that direction.
We repeat this procedure, and as a result we have implemented a measurement of the spin direction.
The corresponding POVM is related to SU(2)-coherent states  \cite{Appleby00}.

In these two examples the coherent state structure of the POVMs involved is transparent. 
One could think that these are very special examples and that POVMs can have  quite diverse structure.
However, the canonical coherent states can be generalized in various ways \cite{CSWTG00}, and it turns out that any POVM arises from a set of generalized coherent states.

The set of all POVMs with a fixed outcome set and taking values in a fixed Hilbert space is convex.
A mixture of two POVMs corresponds to a procedure where two measurements are randomly alternated.
An extremal POVM is free from this type of randomization.
Their importance has been emphasized e.g. in \cite{SSQT01}.

In this paper we present a correspondence between POVMs and sets of generalized coherent states.
We show how this structure leads to a useful characterization of extremal POVMs.
Various aspects of extermal POVMs have been studied e.g. in \cite{Parthasarathy99,DaLoPe05,ChDaSc10,Pellonpaa11}.

We start by fixing the notation and recalling some concepts in Section \ref{sec:prel}.
In Section \ref{sec:cpso} we show that the POVM corresponding to the Husimi Q-function is extremal but that there are also other extremal POVMs related to the Weyl-Heisenberg group.
In Section \ref{sec:povm} we explain how every POVM arises from a set of generalized coherent states, and in Section \ref{sec:covariant} we show the consequences of a covariance property under symmetry transformations.

\section{Preliminaries}\label{sec:prel}

In the following we will need several different vector spaces and they are all assumed to be complex vector spaces.
Throughout this article, $\hi$ and $\mi$ are {separable} Hilbert spaces.
For any Hilbert space
$\hi$ we let $\lh$, $\uh$, $\th$, $\sh$ denote the set of bounded, unitary, trace class, and density operators on $\hi$, respectively. Elements of $\sh$ are identified with the {\it states} of a quantum system with the Hilbert space $\hi$.
The identity operator of a Hilbert space $\hi$ is denoted by $I_\hi$.

Let $(\Omega,\Sigma)$ be a measurable space (i.e.\ $\Sigma$ is a $\sigma$-algebra of subsets of a set $\Omega$). As usual, we define an empty sum to be 0, e.g.\ $\sum_{k=1}^0(\ldots):=0$, $\N:=\{0,1,\ldots\}$, and $\N_\infty:=\N\cup\{\infty\}$.
A mapping $\Mo:\,\Sigma\to\lh$ is called \emph{operator valued measure} if it is (ultra)weakly $\sigma$-additive.
We say that $\Mo$ is
\begin{itemize}
\item \emph{positive} if $\Mo(X)\geq 0$ for all $X\in\Sigma$,  
\item \emph{normalized} if $\Mo(\Omega)=I_\hi$
\item \emph{projection valued} if $\Mo(X)^2=\Mo(X)^*=\Mo(X)$ for all $X\in\Sigma$.
\end{itemize}
Commonly, a normalized projection valued measure is called a \emph{spectral measure}.

Normalized positive operator valued measures (POVMs) are identified with quantum observables, whereas normalized projection valued measures (PVMs) are called \emph{sharp observables} \cite{OQP97}.
For any POVM $\Mo:\,\Sigma\to\lh$ and $\rho\in\th$, we define a complex measure $\rho^\Mo:\,\Sigma\to\C$ by 
\begin{equation}
\rho^\Mo(X):=\tr{\rho\Mo(X)}, \qquad X\in\Sigma \, .
\end{equation}
 If $\rho\in\sh$ then $\rho^\Mo$ is a probability measure and it describes the statistics of a measurement of $\Mo$ in the state $\rho$.

We denote by $\hO(\Sigma,\,\hi)$ the set of all POVMs $\Mo:\,\Sigma\to\lh$.
This set is convex; if $\Mo_1,\Mo_2\in\hO(\Sigma,\,\hi)$, then $t\Mo_1 + (1-t)\Mo_2\in\hO(\Sigma,\,\hi)$ for every $0 < t < 1$.
An element $\Mo\in\hO(\Sigma,\,\hi)$ is \emph{extremal} if it cannot be written as a nontrivial convex decomposition, i.e., if $\Mo=t\Mo_1 + (1-t)\Mo_2$ for $0<t<1$, then $\Mo_1=\Mo_2=\Mo$.
From the physical point of view extremal observables describe quantum measurements that are free from any classical randomness, just in the same way as
pure states describe preparation procedures without classical randomness.

\section{Covariant phase space observables}\label{sec:cpso}

In this section we study covariant phase observables on the phase space $\R\times\R$.
We identify $\R\times\R\simeq\C$ and thus $\Omega=\C$ and $\Sigma$ is the Borel $\sigma$-algebra of $\C$.
The phase space shifts are given by the unitary operators $D(z)$, defined as
\begin{equation}
D(z)=e^{za^\dagger - \bar{z} a} \, ,
\end{equation}
where $a^\dagger,a$ are the usual creation and annihilation operators on $\hi=L^2(\R)$.

Let $h_n\in L^2(\R)$ be the $n$th normalized Hermite function for every $n=0,1,2,\ldots$.
They are eigenvectors of the number operator $N:=a^\dagger a$ and usually called \emph{number states}.
Another important family of vector states are the \emph{(canonical) coherent states}, parametrized by the complex numbers $z\in\C$ and defined as
\begin{equation}
\eta_0(x) := h_0(x)=\frac1{\sqrt[4]\pi} e^{- x^2/2} \, , \qquad  \eta_z := D(z)\eta_0 \, .
\end{equation}
\emph{The vacuum state} $\eta_0=h_0$ is both a number state and a coherent state, but otherwise these families of vector states are disjoint.

Using the canonical coherent states we can define a function 
\begin{equation}
Q_\varphi(z) :=  \mo{\ip{\varphi}{\eta_z}}^2
\end{equation}
for every unit vector $\varphi\in\hi$, and this is known as the \emph{$Q$-function} of a vector state $\varphi$.
Alternatively, we can write a POVM 
\begin{equation}
\Mo_{0} (Z) := \int_Z \kb{\eta_z}{\eta_z} \ \frac{\d^2 z}{\pi} = \int_Z D(z) \kb{\eta_0}{\eta_0} D(z)^* \ \frac{\d^2 z}{\pi} \, , \qquad Z\in\Sigma \, .
\end{equation}
Then 
\begin{equation}
\ip{\varphi}{\Mo_0(Z)\varphi}=\int_Z Q_\varphi(z) \ \frac{\d^2 z}{\pi} 
\end{equation}
and hence $\Mo_0$ describes the measurement of the $Q$-function.

More generally, for each density operator $\rho$ we can define a POVM $\Mo_\rho$ by
\begin{equation}
\Mo_\rho(Z):=\int_Z D(z)\rho D(z)^*\ \frac{\d^2 z}{\pi}  \, , \qquad Z\in\Sigma \, .
\end{equation}
These are called \emph{covariant phase space observables} and their properties have been analyzed in \cite{Appleby99JMO,CaDeLaLe00JMP}.
A covariant phase space observable can be interpreted as joint measurement of unsharp position and momentum observables \cite{Busch85}, and each of them can be implemented as a sequential measurement of an unsharp position measurement followed by a momentum measurement \cite{CaHeTo11}.

Obviously, a covariant phase space observable $\Mo_\rho$ can be extremal only if $\rho$ is a one-dimensional projection since otherwise the spectral decomposition of $\rho$ gives a nontrivial convex decomposition of $\Mo_\rho$ into other covariant phase space observables.
If $\rho$ is a one-dimensional projection, then $\Mo_\rho$ is extremal in the set of all covariant phase space observables.
However, we are interested on the extremality in the set of all POVMs.

We start with some preliminary observations.
Let $\psi\in\hi$ be a unit vector and $\kb{\psi}{\psi}$ the corresponding one-dimensional projection.
We recall from \cite{Pellonpaa11} that a POVM $\Mo_{\kb{\psi}{\psi}}$ is extremal if and only if, for every $\lambda\in L^\infty(\C)$, we have
\begin{align}\label{eq:ext-lambda}
\int_\C \lambda(w) D(w)\kb{\psi}{\psi}D(w)^\ast \d^2 w = 0  \qquad \Longrightarrow \qquad \lambda = 0 \, .
\end{align}
For $A\in\lh$, we have $\ip{\eta_z}{A\eta_z}=0$ for every $z\in\C$ if and only if $A=0$ \cite{LaPe02}.
We thus conclude the following.

\begin{lemma}\label{lemma:lambda}
Let $\psi\in\hi$ be a unit vector.
A POVM $\Mo_{\kb{\psi}{\psi}}$ is extremal if and only if for every $\lambda\in L^\infty(\C)$, we have
\begin{align}\label{eq:ext-lambda}
\int_\C \lambda(w) \mo{\ip{\eta_z}{D(w)\psi}}^2 \d^2 w = 0 \quad \forall z\in\C \qquad \Longrightarrow \qquad \lambda = 0 \, .
\end{align}
\end{lemma}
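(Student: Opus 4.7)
The plan is to deduce this lemma directly by combining the two facts recalled immediately before its statement. From \cite{Pellonpaa11}, extremality of $\Mo_{\kb{\psi}{\psi}}$ is equivalent to the implication that, for every $\lambda\in L^\infty(\C)$, vanishing of the operator
$$A_\lambda \;:=\; \int_\C \lambda(w)\,D(w)\kb{\psi}{\psi}D(w)^*\,\d^2 w$$
forces $\lambda=0$. From \cite{LaPe02}, an operator $A\in\lh$ is zero if and only if $\ip{\eta_z}{A\eta_z}=0$ for all $z\in\C$. The strategy is simply to apply the second fact to $A_\lambda$ and rewrite the resulting scalar integral.

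First I would note that, interpreting the operator integral in the (ultra)weak sense, the definition of $A_\lambda$ is unproblematic for every $\lambda\in L^\infty(\C)$: for any fixed $z\in\C$ one has $\int_\C|\ip{\eta_z}{D(w)\psi}|^2\,\d^2 w=\pi$ by the resolution of identity of the canonical coherent states, so the integrand below is absolutely integrable. This integrability also guarantees that the inner product may be moved inside the integral.

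Next I would compute, for each $z\in\C$,
$$\ip{\eta_z}{A_\lambda\eta_z} \;=\; \int_\C \lambda(w)\,\ip{\eta_z}{D(w)\psi}\,\ip{D(w)\psi}{\eta_z}\,\d^2 w \;=\; \int_\C \lambda(w)\,\mo{\ip{\eta_z}{D(w)\psi}}^2\,\d^2 w.$$
By the result from \cite{LaPe02}, the operator $A_\lambda$ vanishes if and only if the left-hand side is $0$ for every $z\in\C$, which by the displayed identity is equivalent to the left-hand side of the implication in the lemma holding for every $z\in\C$. Chaining this with the criterion from \cite{Pellonpaa11} yields the stated equivalence.

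There is no real obstacle here; the argument is almost a tautological combination of the two quoted facts. The only point requiring a moment of care is justifying the interchange of the scalar product $\ip{\eta_z}{\,\cdot\,\eta_z}$ with the weak operator integral defining $A_\lambda$, and this follows automatically from the definition of the weak integral together with the integrability noted above.
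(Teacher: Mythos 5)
Your proposal is correct and follows exactly the route the paper intends: the lemma is stated as an immediate consequence of the two facts recalled just before it (the extremality criterion from \cite{Pellonpaa11} and the fact from \cite{LaPe02} that $\ip{\eta_z}{A\eta_z}=0$ for all $z$ forces $A=0$), combined via the computation $\ip{\eta_z}{A_\lambda\eta_z}=\int_\C\lambda(w)\mo{\ip{\eta_z}{D(w)\psi}}^2\d^2 w$. Your added remarks on absolute integrability and the interchange of the inner product with the weak integral are sound and only make explicit what the paper leaves implicit.
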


We will now utilize this criterion to see that some covariant phase space observables are extremal and some are not.
For each $z\in\C$, we obtain
\begin{align*}
 \int_\C \lambda(w) \mo{\ip{\eta_z}{D(w)\psi}}^2 \frac{\d^2 w}{\pi} =  \int_\C \lambda(w) \mo{\ip{\eta_{z-w}}{\psi}}^2 \frac{\d^2 w}{\pi} = ( \lambda \ast Q_\psi ) (z) \, , 
\end{align*}
where $\lambda$ in the last expression is a tempered distribution defined by the function $\lambda\in L^\infty(\C)$ and $*$ is the convolution.
(Note that $Q_\psi$ is a smooth rapidly decreasing function $\R^2\to\R$, hence the convolution can be defined.)

The fact that $\Mo_0$ is extremal has been pointed out already in \cite{Holevo85}.
We can now present a full characterization of extremal covariant phase space observables.

\begin{theorem}\label{thm:extremal}
Let $\psi\in\hi$ be a unit vector.
The following conditions are equivalent.
\begin{itemize}
\item[(i)] The covariant phase space observable $\Mo_{\kb{\psi}{\psi}}$ is extremal in the set of all POVMs.
\item[(ii)] The Fourier transform of $Q_\psi$ is everywhere nonzero.
\item[(iii)] $\ip{\psi}{D(z)\psi}\neq 0$ for every $z\in\C$.
\end{itemize}
\end{theorem}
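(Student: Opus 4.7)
The plan is to work from Lemma~\ref{lemma:lambda} and the convolution identity preceding the theorem, which together reduce (i) to the injectivity statement: for $\lambda\in L^\infty(\C)$, $\lambda\ast Q_\psi = 0$ implies $\lambda=0$. The whole theorem then becomes a statement about the zero set of the Fourier transform of $Q_\psi$, and the role of (iii) is to tie that zero set to $\chi_\psi(z):=\ip{\psi}{D(z)\psi}$.

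For the equivalence (ii)$\iff$(iii) I would compute the (symplectic) Fourier transform of $Q_\psi$ in closed form. Using the Weyl inversion $\kb{\eta_0}{\eta_0}=\pi^{-1}\int e^{-|w|^2/2}D(-w)\,d^2w$ together with the phase-space covariance $D(\beta)D(-w)D(-\beta)=e^{\bar\beta w-\beta\bar w}D(-w)$ (derived directly from $D(z)D(w)=e^{(z\bar w-\bar z w)/2}D(z+w)$), one gets
\begin{equation*}
Q_\psi(\beta)=\frac{1}{\pi}\int_\C e^{-|w|^2/2}\,\chi_\psi(-w)\,e^{\bar\beta w-\beta\bar w}\,d^2w,
\end{equation*}
so the symplectic Fourier transform of $Q_\psi$ is $\alpha\mapsto e^{-|\alpha|^2/2}\overline{\chi_\psi(\alpha)}$. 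Since the Gaussian factor is nowhere zero and the symplectic Fourier transform differs from the ordinary one only by a linear change of variables, (ii) and (iii) hold simultaneously.

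For (ii)$\Rightarrow$(i) I would pass to tempered distributions. The function $Q_\psi$ is Schwartz, hence so is $\widehat{Q_\psi}$. If $\lambda\ast Q_\psi=0$ with $\lambda\in L^\infty(\C)$, then $\hat\lambda\cdot \widehat{Q_\psi}=0$ in $\mathcal S'(\C)$. If $\widehat{Q_\psi}$ has no zeros, then for every $\phi\in C_c^\infty(\C)$ the quotient $\phi/\widehat{Q_\psi}$ lies in $C_c^\infty(\C)$ (smoothness is preserved by division by a nonvanishing smooth function, and compactness of the support is automatic), so $\hat\lambda[\phi]=\hat\lambda\bigl[(\phi/\widehat{Q_\psi})\widehat{Q_\psi}\bigr]=0$. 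Density of $C_c^\infty$ in $\mathcal S$ yields $\hat\lambda=0$, hence $\lambda=0$.

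For the converse $\neg$(ii)$\Rightarrow\neg$(i) I would exhibit an explicit nontrivial element of the kernel using a plane wave. If $\widehat{Q_\psi}(\alpha_0)=0$, set $\lambda(z):=e^{i(\alpha_0\bar z-\bar\alpha_0 z)}$, which has modulus one and is hence a nonzero element of $L^\infty(\C)$. Because $\lambda$ is a character, $\lambda(z-w)=\lambda(z)\overline{\lambda(w)}$, and a change of variables in the convolution gives $(\lambda\ast Q_\psi)(z)=\lambda(z)\widehat{Q_\psi}(\alpha_0)=0$ for every $z$; by Lemma~\ref{lemma:lambda} this contradicts extremality. The main technical obstacle is the phase-space calculation in the first step — one needs to keep careful track of the displacement-operator commutation relations and the Weyl-inversion convention — while the distributional argument and the plane-wave construction are essentially mechanical once that formula is in hand.
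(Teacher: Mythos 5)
Your proposal is correct in substance and follows the same overall route as the paper: reduce (i) to the injectivity of convolution with $Q_\psi$ on $L^\infty(\C)$ via Lemma \ref{lemma:lambda}, pass to Fourier transforms of tempered distributions, and detect a zero of $\widehat{Q_\psi}$ with a plane wave. Two points of comparison. First, your treatment of (i)$\Rightarrow$/$\Leftarrow$(ii) is actually more careful than the paper's: the paper simply asserts that $\widehat{Q_\psi}\widehat\lambda=0$ with $\widehat{Q_\psi}$ nonvanishing forces $\widehat\lambda=0$, whereas you justify it by dividing test functions by the nonvanishing smooth function $\widehat{Q_\psi}$; and your plane-wave construction for the converse is the paper's choice $\lambda(z)=\exp[\rmi(z\bar w+\bar z w)]$ in disguise (the paper phrases it as $\widehat\lambda$ being a Dirac delta at the zero of $\widehat{Q_\psi}$). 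Second, for (ii)$\Leftrightarrow$(iii) the paper argues qualitatively that $Q_\psi$ is a Gaussian convolution of the Wigner function, whose Fourier transform is $z\mapsto\ip{\psi}{D(\rmi z)\psi}$; your Weyl-inversion computation produces the same fact in closed form, $\widehat{Q_\psi}(\alpha)=e^{-|\alpha|^2/2}\,\overline{\ip{\psi}{D(\alpha)\psi}}$ up to the linear change of variables between the symplectic and ordinary transforms, which is arguably cleaner and makes the equivalence of zero sets immediate.

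One concrete slip to fix: the exponent in your plane wave is wrong by a factor of $\rmi$. For $\lambda(z)=e^{\rmi(\alpha_0\bar z-\bar\alpha_0 z)}$ the quantity $\alpha_0\bar z-\bar\alpha_0 z=2\rmi\,{\rm Im}(\alpha_0\bar z)$ is purely imaginary, so $\rmi(\alpha_0\bar z-\bar\alpha_0 z)$ is \emph{real} and your $\lambda$ is an unbounded real exponential, not a unimodular character; it is not in $L^\infty(\C)$ and does not satisfy $\lambda(z-w)=\lambda(z)\overline{\lambda(w)}$. You want either $e^{\alpha_0\bar z-\bar\alpha_0 z}$ (no extra $\rmi$) or the paper's $e^{\rmi(\alpha_0\bar z+\bar\alpha_0 z)}$; with either choice the rest of your argument goes through verbatim. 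A shared (inherited) caveat: both you and the paper assert that $Q_\psi$ is Schwartz for an arbitrary unit vector $\psi$, which is what licenses treating $\widehat{Q_\psi}$ as a smooth multiplier; strictly speaking $Q_\psi$ is only guaranteed bounded and integrable in general, so some additional care (or a restriction on $\psi$) is needed there — but that is the paper's gap as much as yours.
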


\begin{proof}
(i)$\Leftrightarrow$(ii):
The condition $( \lambda \ast Q_\psi ) (z)=0$ for all $z\in\C$ equals $ \widehat{Q_\psi}\widehat{\lambda}=0$
 where $\widehat{}$ means the Fourier transform of a tempered distribution  \cite{FA91}.
By Lemma \ref{lemma:lambda} we conclude that $\Mo_{\kb{\psi}{\psi}}$ is extremal if and only if $ \widehat{Q_\psi}\widehat{\lambda}=0$, $\lambda\in L^\infty(\C)$, implies $\lambda=0$.
If $ \widehat{Q_\psi}$ is everywhere nonzero then $ \widehat{Q_\psi}\widehat{\lambda}=0$ implies that $\widehat{\lambda}=0$ and hence $\lambda=0$.
Suppose then that $ \widehat{Q_\psi}(w)= 0$ for some $w\in\C$. By choosing $\lambda(z)=\exp[\rmi(z\ov w+\ov z w)]$ one sees that $\widehat\lambda$ proportional to a Dirac delta distribution concentrated on $w$ and 
$ \widehat{Q_\psi}\widehat{\lambda}=0$ yielding a contradiction.
(ii)$\Leftrightarrow$(iii): 
The Q-function is a Gaussian convolution of the Wigner function.
The Wigner function of a vector state $\psi$ is the Fourier transform of the function $z\mapsto\ip{\psi}{D(iz)\psi}$.
Hence, the Fourier transform $Q_\psi$ is everywhere nonzero if and only if the function $z\mapsto\ip{\psi}{D(z)\psi}$ is everywhere nonzero.
\end{proof}

As an easy consequence of Theorem \ref{thm:extremal} we see that all covariant phase space observables related to coherent states are extremal.
Namely, if $\psi=\eta_w$ for some $w\in\C$, then
\begin{align}
Q_{\eta_w}(z)=\mo{ \ip{\eta_z}{\eta_w} }^2 = e^{-\mo{w-z}^2} \, .
\end{align}
Since $Q_{\eta_w}$ is a Gaussian function, its Fourier transform is also Gaussian and therefore everywhere nonzero.
More generally, let $\psi$ be a \emph{squeezed state}, in which case $\psi$ is a Gaussian function.
Then again $Q_{\psi}$ is a Gaussian function, its Fourier transform is Gaussian and thus everywhere nonzero.
We conclude that $\Mo_{\kb{\psi}{\psi}}$ is extremal if $\psi$ is a squeezed state.

Let us then have examples of covariant phase space observables that are not extremal even if they are extremal in the set of covariant phase space observables.
The Fourier transform of $f\in L^1(\C)$ can be written in the form
$$
\widehat f(w)=\frac{1}{\pi}\int_\C f(z)\rme^{-\rmi(z\ov w+\ov z w)}\d^2 z=\frac{1}{2\pi}\int_\R\int_\R f\big(2^{-1/2}(q+\rmi p)\big)
\rme^{-\rmi(uq+vp)}\d q\d p
$$
where $w=(u+\rmi v)/\sqrt2$ and $z=(q+\rmi p)/\sqrt2$.
Thus, for
\begin{align*}
Q_{h_n}(z)=  \frac{e^{-\mo{z}^2} \mo{z}^{2n}}{n!} \, 
\end{align*}
we obtain
$$
\widehat{Q_{h_n}}(w)=L_{n}\big(|w|^2\big)\rme^{-|w|^2}
$$
where
$$
L_n(t):=\frac{\rme^{t}}{n!}\frac{\d^n t^n\rme^{-t}}{\d t^n} 
$$ 
is the $n$th Laguerre polynomial which has $n$ (real) strictly positive roots.
For example,
$$
\widehat{Q_{h_1}}(w)= L_{1}\big(|w|^2\big)\rme^{-|w|^2}=\big(1-|w|^2\big)\rme^{-|w|^2}
$$
so that $\widehat{Q_{h_1}}(w)=0$ when $\mo{w}=1$.

We can also find a convex decomposition of $\Mo_{\kb{h_1}{h_1}}$.
For $\lambda(z)=\cos(z+\ov z)$ one gets
\begin{align*}
\int_\C \lambda(w) \mo{\ip{\eta_z}{D(w)h_1}}^2 \d^2 w
& =
\int_\C \cos(w+\ov w) \rme^{-\mo{z-w}^2} \mo{z-w}^{2} 
 \d^2 w = 0 
\end{align*}
so that we may write
$
\Mo_{\kb{h_1}{h_1}}=\frac{1}{2}\Mo_+ + \frac{1}{2}\Mo_-
$
where the (unequal) POVMs $\Mo_{\pm}$ are defined by
$$
\Mo_{\pm}(Z):=\frac{1}{\pi}\int_Z \big[1\pm\cos(z+\ov z)\big]\, D(z) \kb{h_1}{h_1} D(z)^*\d^2 z \, , \qquad Z\in\Sigma \, .
$$

Finally, we notice a connection of extremality to the informational completeness \cite{Prugovecki77}.
A POVM $\Mo$ is informationally complete if $\rho_1^\Mo\neq\rho_2^\Mo$ for two states $\rho_1\neq\rho_2$.
It was proved in \cite{AlPr77JMP} that a covariant phase space observable $\Mo_\rho$ is informationally complete if $\tr{\rho D(z)}\neq 0$ for almost all $z\in\C$.
Hence, from Theorem \ref{thm:extremal} we conclude the following.

\begin{corollary}\label{cor:ic}
Let $\psi\in\hi$ be a unit vector.
If the covariant phase space observable $\Mo_{\kb{\psi}{\psi}}$ is extremal in the set of all POVMs, then it is informationally complete.
\end{corollary}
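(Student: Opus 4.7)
The proof is almost immediate by chaining together Theorem~\ref{thm:extremal} with the sufficient condition for informational completeness cited just above the corollary. My plan is to go through three short steps.

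First, I would observe that for a one-dimensional projection $\rho=\kb{\psi}{\psi}$, one has the identity
\begin{equation*}
\tr{\rho D(z)} = \ip{\psi}{D(z)\psi}, \qquad z\in\C,
\end{equation*}
so the informational completeness criterion from \cite{AlPr77JMP}, namely $\tr{\rho D(z)}\neq 0$ for almost every $z\in\C$, becomes the condition that $\ip{\psi}{D(z)\psi}\neq 0$ for almost every $z\in\C$.

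Second, I would invoke Theorem~\ref{thm:extremal}, specifically the equivalence of (i) and (iii): extremality of $\Mo_{\kb{\psi}{\psi}}$ in the set of all POVMs is equivalent to $\ip{\psi}{D(z)\psi}\neq 0$ for \emph{every} $z\in\C$. In particular this is much stronger than the almost-everywhere condition needed for informational completeness.

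Finally, I would combine the two observations: extremality yields $\ip{\psi}{D(z)\psi}\neq 0$ for all $z$, hence $\tr{\rho D(z)}\neq 0$ for all (and a fortiori almost all) $z\in\C$, and therefore $\Mo_{\kb{\psi}{\psi}}$ is informationally complete by the result of \cite{AlPr77JMP}. There is no real obstacle here; the corollary is a direct logical consequence of the characterization already established in Theorem~\ref{thm:extremal} together with the quoted sufficient condition. The only thing worth flagging is that the implication is strictly one-way, since informational completeness requires nonvanishing only almost everywhere while extremality requires nonvanishing pointwise; the corollary does not claim the converse.
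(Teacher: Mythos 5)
Your argument is exactly the paper's: the corollary follows by combining the equivalence (i)$\Leftrightarrow$(iii) of Theorem~\ref{thm:extremal} with the criterion of \cite{AlPr77JMP}, noting that $\tr{\kb{\psi}{\psi}D(z)}=\ip{\psi}{D(z)\psi}$. The proof is correct and complete.
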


Obviously, a POVM can be extremal without being informationally complete.
For instance, any projection valued measure is extremal but not informationally complete.

\section{Generalized coherent states for POVMs}\label{sec:povm}

As we have seen in Section \ref{sec:cpso}, the extremality question for covariant phase space observables  turns out to be tractable due to their simple mathematical structure.
Here we show that a similar approach can be developed also in the general situation.

Let ${\bf h}=\{h_n\}_{n=1}^{\dim\hi}$ 
be an orthonormal basis of $\hi$ 
and denote $V_{\bf h}:=\lin\{h_n\}.$
Then $V_{\bf h}$ is a dense subset in $\hi$.
Let $V_{\bf h}^\times$ be the algebraic antidual of the vector space $V_{\bf h}$. Recall that $V_{\bf h}^\times$ can be identified with the linear space of formal series $c=\sum_{n=1}^{\dim\hi}c_nh_n$ where $c_n$'s are arbitrary complex numbers. 
Hence, $V_{\bf h}\subseteq\hi\subseteq V_{\bf h}^\times$.

For all $\psi\in V_{\bf h}$ and $c\in V_{\bf h}^\times$, we denote their dual pairing by
\begin{equation}
\<\psi|c\>:=\sum_{n=1}^{\dim\hi} \<\psi|h_n\>c_n \quad \textrm{and} \quad \<c|\psi\>:=\overline{\<\psi|c\>} \, .
\end{equation}
We say that a mapping 
$$
c:\,\Omega\to V_{\bf h}^\times \, , \quad c(x)= \sum_{n=1}^{\dim\hi}c_n(x)h_n
$$
is {\it (weak${}^*$-)measurable} if all its components $x\mapsto c_n(x)$ are measurable. Note that, if $c:\,\Omega\to\hi\subseteq V_{\bf h}^\times$ is weak${}^*$-measurable then
the maps $x\mapsto\<\psi|c(x)\>$ are measurable for all $\psi\in\hi$.

We denote by $\hd$ a direct integral Hilbert space where 
$\mu$ is a $\sigma$-finite nonnegative measure on $(\Omega,\Sigma)$ and
$\{\mi(x)\}_{x\in\Om}$ is a ($\mu$-measurable) field of {separable} Hilbert spaces. Recall that for any $\psi\in\hd$ one has $\psi(x)\in\mi(x)$ for all $x\in\Om$ and $\|\psi\|^2=\int_\Omega\|\psi(x)\|^2_{\mi(x)}\d\mu(x)$.
In addition, the mapping $\Om\ni x\mapsto\dim\mi(x)\in\N_\infty$ is $\mu$-measurable.
For example, the Hilbert space $L^2_\mi(\mu)$ of square $\mu$-integrable functions from $\Omega$ to$\mi$ is of the direct integral form where now $\mi(x)\equiv\mi$.

We denote by $L^\infty(\mu)$ the Abelian von Neumann algebra of $\mu$-essentially bounded complex functions on $\Omega$, and for any  $f\in L^\infty(\mu)$, we denote by $\tilde f$ the multiplicative (i.e.\ diagonalizable) operator $(\tilde f\psi)(x):=f(x)\psi(x)$ on $\hd$. 
Especially, we have the {\it canonical spectral measure} 
$$
\Sigma\ni X\mapsto\widetilde{\CHI X}\in\li\Big(\hd\Big) \, , 
$$
where $\CHI X$ is the characteristic function of $X\in\Sigma$.
A bounded operator $D$ on $\hd$ is decomposable, i.e.\ $D=\int_\Omega^\oplus D(x)\d\mu(x)$, if $D(x)\in\mathcal L(\hi_{n(x)})$ and $(D\psi)(x)=D(x)\psi(x)$ for $\mu$-almost all $x\in\Omega$ and for all  $\psi\in\hd$, and
$\|D\|={\rm ess\,sup}_{x\in\Omega}\|D(x)\|<\infty$. Obviously, any $f\in L^\infty(\mu)$ defines a bounded decomposable operator $\tilde f$.

The results proved in \cite{Pellonpaa11,HyPeYl07} can be now stated in the following form.

\begin{theorem} \label{thrm:peruslause}
Let $\Mo:\,\Sigma\to\lh$ be a POVM, $\mu:\,\Sigma\to[0,\infty]$ a $\sigma$-finite measure such that $\Mo$ is absolutely continuous with respect to $\mu$, and ${\bf h}$ be an orthonormal basis of $\hi$.
There exists a direct integral Hilbert space $\mi^\oplus\equiv\hd$ (with $\dim\mi(x)\le\dim\hi$) 
such that for all $X\in\Sigma$,
\begin{enumerate}
\item
$\Mo(X)=Y^*\widetilde{\CHI X}Y$ where $Y=\sum_{m=1}^{\dim\hi}\kb{\psi_m}{h_m}$ is an isometry and
$\{\psi_m\}_{m=1}^{\dim\hi}$ is an orthonormal set of $\mi^\oplus$ such that
the set of linear combinations of vectors $\CHI X \psi_m$ is dense in $\mi^\oplus$
(a minimal Naimark dilation of $\Mo$). Hence,
\begin{eqnarray*}
\Mo(X)=\sum_{n,m=1}^{\dim\hi}\int_{X}\<\psi_n(x)|\psi_m(x)\>_{\mi(x)}\d\mu(x)\ \kb{h_n}{h_m} \qquad \text{(weakly).}
\end{eqnarray*} 
\item
There are measurable maps $d_k:\,\Omega\to V_{\bf h}^\times$ such that, for all $x\in\Omega$, 
the vectors $d_k(x)\ne 0$, $k\in\Z_{\dim\mi(x)}$, are linearly independent, and
$$
\<\fii|\Mo(X)\psi\>=\int_X \sum_{k=1}^{\dim\mi(x)} \<\fii|d_k(x)\>\<d_k(x)|\psi\>\d\mu(x),\hspace{0.5cm}\fii,\,\psi\in V_{\bf h},
$$
(a minimal diagonalization of $\Mo$). 
\item $\Mo$ is a spectral measure if and only if $\{\psi_m\}_{m=1}^{\dim\hi}$ is an orthonormal basis of $\mi^\oplus$. Then $Y$ is a unitary operator and $\mi^\oplus$ can be identified with $\hi$.
\item $\Mo$ is extremal if and only if for any decomposable operator $D=\int_\Omega^\oplus D(x)\d\mu(x)\in\li\big(\mi^\oplus\big)$, the condition $Y^*DY=0$ implies $D=0$.
\end{enumerate}
\end{theorem}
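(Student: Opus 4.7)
The plan is to derive all four parts from a single Naimark-type construction built on the Radon--Nikodym derivative of $\Mo$ with respect to $\mu$. For (1), since $\Mo\ll\mu$, the scalar measures $X\mapsto\<h_n,\Mo(X)h_m\>$ admit densities $m_{nm}\in L^1(\mu)$ that assemble into a $\mu$-measurable field of positive operators $x\mapsto M(x)\in\lh$ with $\Mo(X)=\int_X M(x)\,\d\mu(x)$ weakly. At each $x$ I take $\mi(x)$ to be the Hilbert-space completion of $\mathrm{ran}\,M(x)^{1/2}$ and set $\psi_m(x):=M(x)^{1/2}h_m\in\mi(x)$. The identity $\sum_m|\<h,h_m\>|^2\|\psi_m\|^2=\<h,\Mo(\Om)h\>=\|h\|^2$ shows that $Yh_m:=\psi_m$ extends to an isometry $Y:\hi\to\mi^\oplus$, and a direct computation gives $Y^*\widetilde{\CHI X}Y=\Mo(X)$ together with the stated weak matrix formula. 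Minimality --- density of $\lin\{\widetilde{\CHI X}\psi_m\}$ in $\mi^\oplus$ --- is automatic since $M(x)^{1/2}$ has dense range in $\mi(x)$ by construction, and this forces $\dim\mi(x)\le\dim\hi$.

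For (2), I would pick a $\mu$-measurable orthonormal basis $\{e_k(x)\}_{k=1}^{\dim\mi(x)}$ of $\mi(x)$ and define $d_k(x)\in V_{\bf h}^\times$ by $(d_k(x))_n:=\<\psi_n(x),e_k(x)\>_{\mi(x)}$. A Parseval identity in $\mi(x)$ then gives $\sum_{k}\<\fii|d_k(x)\>\<d_k(x)|\psi\>=\<(Y\fii)(x),(Y\psi)(x)\>_{\mi(x)}$ for $\fii,\psi\in V_{\bf h}$, which integrates to the claimed formula via (1); nonvanishing and linear independence of the $d_k(x)$ reduce to the density of $\{\psi_n(x)\}$ in $\mi(x)$ combined with orthonormality of $\{e_k(x)\}$. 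For (3), $\Mo$ is a PVM iff $\Mo(X)^2=\Mo(X)$ for every $X$, which rewrites as $Y^*\widetilde{\CHI X}(I-YY^*)\widetilde{\CHI X}Y=0$; since $I-YY^*$ is a projection, this forces $\widetilde{\CHI X}\psi_m\in\mathrm{ran}\,Y$ for all $X,m$, and minimality then gives $\mathrm{ran}\,Y=\mi^\oplus$, so $Y$ is unitary and $\{\psi_m\}$ is an orthonormal basis of $\mi^\oplus$; the converse is immediate from $\widetilde{\CHI X}^2=\widetilde{\CHI X}$ and $YY^*=I$.

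Part (4) is the heart of the statement. Assume first that $\Mo$ is extremal and let $D\in\li(\mi^\oplus)$ be decomposable with $Y^*DY=0$. Splitting $D$ into self-adjoint decomposable real and imaginary parts, it suffices to treat $D=D^*$. Choose $\eps>0$ with $\|\eps D\|\le1$ and set $\Mo_\pm(X):=Y^*\widetilde{\CHI X}(I\pm\eps D)Y$. Since $D$ commutes with $\widetilde{\CHI X}$, these are positive; $\Mo_\pm(\Om)=Y^*(I\pm\eps D)Y=I$ because $Y^*DY=0$; and $\tfrac12(\Mo_++\Mo_-)=\Mo$. Extremality forces $\Mo_+=\Mo_-$, i.e.\ $Y^*\widetilde{\CHI X}DY=0$ for every $X$, which in matrix form reads $\<\widetilde{\CHI X}\psi_m,D\psi_n\>=0$ for all $X,m,n$; minimality then yields $D\psi_n=0$, and decomposability extends this via $D\widetilde{\CHI X}\psi_n=\widetilde{\CHI X}D\psi_n=0$, forcing $D=0$ on the dense span. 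For the converse, if $\Mo=t\Mo_1+(1-t)\Mo_2$ is a nontrivial convex decomposition, then $\Mo_1\le t^{-1}\Mo$, and the operator-valued Radon--Nikodym theorem of \cite{Pellonpaa11,HyPeYl07} produces a positive \emph{decomposable} $R\in\li(\mi^\oplus)$ with $\Mo_1(X)=Y^*R\widetilde{\CHI X}Y$; then $D:=R-I$ is decomposable, $Y^*DY=\Mo_1(\Om)-I=0$, and $D\ne0$ (else $\Mo_1=\Mo$, against nontriviality).

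I expect the main obstacle to be this converse step in (4): one must upgrade the operator inequality $\Mo_1\le t^{-1}\Mo$ to the existence of a \emph{decomposable} --- not merely bounded --- Radon--Nikodym derivative $R$. This rests on identifying the commutant of $Y\lh Y^*$ in $\li(\mi^\oplus)$ with the algebra of decomposable operators, which in turn is guaranteed by the minimality built into (1); the remaining steps are essentially bookkeeping with direct integrals.
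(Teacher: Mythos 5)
The paper does not actually prove this theorem; it imports it from \cite{Pellonpaa11,HyPeYl07}, so your proposal has to be judged against the arguments there, which it largely reproduces. Parts (3) and (4) are correct and follow the standard route: the perturbation $\Mo_\pm(X)=Y^*\widetilde{\CHI X}(I\pm\eps D)Y$ for one direction, and an operator-valued Radon--Nikodym derivative commuting with the diagonal algebra for the other. (One small correction there: the relevant commutant is that of the diagonalizable algebra $\widetilde{L^\infty(\mu)}$, which equals the decomposable operators by von Neumann's direct-integral theorem; it is not the commutant of $Y\lh Y^*$, and minimality enters only to make the form defined by $\Mo_1$ densely defined and $R$ unique.)

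The genuine gap is at the very start, in part (1). From $\Mo\ll\mu$ you only obtain, for $\mu$-a.e.\ $x$, a positive semidefinite \emph{matrix} $\big(m_{nm}(x)\big)$, i.e.\ a positive sesquilinear form on $V_{\bf h}$; in infinite dimensions this form is in general unbounded and does \emph{not} come from a positive operator $M(x)\in\lh$, so $M(x)^{1/2}$ and ``the completion of $\mathrm{ran}\,M(x)^{1/2}$'' are undefined. Indeed, if your construction worked, every $d_k(x)=M(x)^{1/2}e_k(x)$ would lie in $\hi$, whereas the theorem only places $d_k(x)$ in $V_{\bf h}^\times$, and the paper's own closing example (the canonical position observable, where $d_k(\mathbf g)$ is a Dirac delta) shows this enlargement is unavoidable. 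The correct construction is the GNS one: define $\mi(x)$ as the Hausdorff completion of $V_{\bf h}$ with respect to the semi-inner product $\<\fii,\psi\>_x:=\sum_{n,m}\ov{\<h_n|\fii\>}\,m_{nm}(x)\<h_m|\psi\>$ and set $\psi_m(x):=[h_m]_x$; one must then verify that $\{\mi(x)\}_x$ is a measurable field of Hilbert spaces and that the orthonormal bases $e_k(x)$ can be chosen measurably (measurable Gram--Schmidt), which is the actual content of \cite{HyPeYl07}. Once this is done, your parts (2)--(4) go through unchanged, since they use only the orthonormality of $\{\psi_m\}$ in $\mi^\oplus$, the totality of $\{\psi_n(x)\}_n$ in $\mi(x)$, and minimality. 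Do also repair the displayed isometry identity, which as written drops the cross terms: the correct statement is $\<\psi_n|\psi_m\>_{\mi^\oplus}=\int_\Omega m_{nm}(x)\d\mu(x)=\<h_n|\Mo(\Omega)h_m\>=\delta_{nm}$, which is what makes $Y$ an isometry.
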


\begin{remark}\rm \label{rem:antidual}
As shown in \cite[Section 6]{HyPeYl07},
$V_{\bf h}$ can be extended to a Banach space $\hB$ 
such that the vectors $d_k(x)$ in (2) of theorem \ref{thrm:peruslause} can be viewed as elements of the topological antidual $\hB^\times$ of $\hB$. Hence, throughout this article, one can replace $V_{\bf h}$ and $V_{\bf h}^\times$ with $\hB$ and $\hB^\times$, respectively.
Especially, we have a triplet of Banach spaces $\hB\subseteq\hi\subseteq \hB^\times$.
\end{remark}

\begin{definition}\rm  (\cite{Pellonpaa11}) 
We say that $Y\hi\subseteq\mi^\oplus$ is an {\it $\Mo$-representation} space. It is unique up to a decomposable unitary map so that we may choose $$\mi(x)=\lin\set{h_n}{n \le \dim\mi(x)}.$$ The elements of $Y\hi$ are called {\it $\Mo$-wave functions}.
The vectors $d_k(x)$ in (2) of theorem \ref{thrm:peruslause} are called {\em generalized coherent states} associated with the POVM $\Mo$. If, for $\mu$-almost all $x\in\Omega$, $n(x)\in\{0,r\}$ where $r\in\N_\infty$ we say that $\Mo$ {\em is of a constant rank,} and denote $\rank\Mo:=r$.
\end{definition}

If a POVM $\Mo$ is of a constant rank then, by defining $$\Omega_{\rm supp}:=\set{x\in\Omega}{\dim\mi(x)>0},$$
we can write
$$
\<\fii|\Mo(X)\psi\>=\sum_{k=1}^{\rank\Mo}\int_{X\cap\Omega_{\rm supp}} \<\fii|d_k(x)\>\<d_k(x)|\psi\>\d\mu(x),\hspace{0.5cm}\fii,\,\psi\in V_{\bf h}.
$$
Notice that one can always define $d_k(x):=0$ for all $k>\dim\mi(x)$ so that one may let the $k$--index in (2) of theorem \ref{thrm:peruslause} run from 1 to $\infty$ and then take the summation out from the integrand as above. Finally, we note that the relation between vectors $d_k(x)\in V_{\bf h}^\times$ and $\psi_n(x)\in\mi(x)$ can be chosen to be 
$$
\<d_k(x)|h_n\>=\<b_k(x)|\psi_n(x)\>_{\mi(x)}=\<b_k(x)|(Yh_n)(x)\>_{\mi(x)}
$$
where $x\mapsto\{b_k(x)\}_{k=1}^{\dim\mi(x)}$ is a measurable field of orthonormal bases of $\hd$. Hence, by remark \ref{rem:antidual}, when $V_{\bf h}^\times$ is replaced by $\hB^\times$ we have
\begin{equation}\label{eq:tarkea}
\<d_k(x)|\psi\>=\<b_k(x)|(Y\psi)(x)\>_{\mi(x)}
\end{equation}
for $\mu$-almost all $x\in\Omega$ and for all $\psi\in\hB$.

\begin{remark}\rm
Let $S$ be a (bounded or unbounded) selfadjoint operator on $\hi$ and $\mathsf M$ its spectral measure (defined on the Borel $\sigma$-algebra $\br$ of $\R$). 
As shown in \cite{HyPeYl07}, there exists an orthonormal basis ${\bf h}$ of $\hi$ such that $SV_{\bf h}\subseteq V_{\bf h}$ and, if $S^\times:\,V_{\bf h}^\times\to V_{\bf h}^\times$ is an extension of $S$,  one gets $S^\times d_k(x)=xd_k(x)$ for $\mu$--almost all $x$ in the spectrum of $S$. Hence, theorem \ref{thrm:peruslause} can be viewed as a generalization of {\it Dirac formalism} for POVMs and we may call $\dim\mi(x)$ the multiplicity of a measurement outcome $x\in\Omega$.
\end{remark}

\section{Generalized coherent states for covariant POVMs}\label{sec:covariant}

In the following we study a situation where a symmetry group $G$ is a separable locally
compact unimodular group of type I. 
Examples of such groups are compact groups (e.g.\ SU($n$)), locally compact Abelian groups (e.g.\ $\R^n$),
connected semi-simple Lie groups and nilpotent Lie groups (e.g.\ Heisenberg group) so that
the usual symmetry groups used in physics fall into this class.

The behavior of a POVM under the symmetry group determines to a large extent its structure and
this idea goes back to Mackey \cite{MFQM63}.
For simplicity, we assume that $G$ acts on itself from left, hence the value space of POVMs is $G$ and we denote by $\bo G$  the Borel $\sigma$-algebra of $G$. 
Our results on covariance systems is based on theory developed in \cite{Cattaneo79,Holevo87,CaDe03,HoPe09}.

Let $V:\,G\to\uh,\,g\mapsto V_g$, be a strongly continuous unitary representation of $G$ on a separable
Hilbert space $\hi$. Recall that $V$ is always decomposable, that is, $\hi$ can be
represented as a direct integral of Hilbert spaces and the
representation `maps a fiber into the same fiber'. 
More specifically, we assume that 
\begin{equation}
\mathcal{H}=\int_{\G}^{\oplus }\mathcal{H}{(\la )}\d\nu (\la ),\qquad
\mathcal{H }{(\la )=\mathcal{K}(\la )}\otimes \mathcal{L}{(\la )},
\end{equation}
where $\G$ is the unitary dual of $G$, that is, for any $\la \in \hat{G}$, the map $g\mapsto \la
_{g}$ is an irreducible representation acting in a Hilbert space
${\mathcal{K}(\la )}$
and $\mathcal{L}(\la )$ is the multiplicity (Hilbert) space of $\la$. Moreover,
\begin{equation}
V_{g}=\int_{\G}^{\oplus }V_{g}(\la )\d\nu(\la ),\qquad 
V_{g}(\la):=\la _{g}\otimes I_{\mathcal{L}{(\la )}}.
\end{equation}
It is necessary for the existence of a covariant POVM
that the measure $\nu$ is absolutely continuous with respect to the
Plancherel measure and hence we may and will assume that 
{\it $\nu$ is the Plancherel measure on $\hat{G}$ associated with the Haar measure of $G$}.
We say that a POVM
$\Mo:\,\bo G\to\lh$ is \textit{covariant} if
$$
V_g \Mo(X)V_g^*=\Mo(gX), \quad g\in G, \quad X\in\bo G \, .
$$

For any separable Hilbert space $\mi$ we define $L^2_\mi(G):=L^2_\mi(\mu)$ where $\mu:\,\bo G\to[0,\infty]$ is a Haar measure of $G$ denoted briefly by $\d\mu(g)=\d g$.
Let $T:\,G\to\hU\big(L^2_\mi(G)\big),\,g\mapsto T_g$, be the strongly continuous unitary representation of $G$ defined by
\begin{equation}\label{eq:tildeV}
\big(T_g\psi\big)(g')=\psi\big(g^{-1}g'\big), \quad g,\,g'\in G,\quad \psi\in
L^2_{\mathcal{M}}(G).
\end{equation}
We have $\tilde V_g\widetilde{\CHI X}\tilde V_g^*=\widetilde{\CHI{gX}}$.
Now (1) of theorem \ref{thrm:peruslause} can be replaced by the following covariant version of a minimal Naimark dilation \cite[Proposition 2]{Cattaneo79}.

\begin{theorem}\label{thrm:covaNaima}
For any covariant POVM $\Mo$, there exist a separable Hilbert space $\mi$ (with $\dim\mi\le\dim\hi$) such that, for all $X\in\bo G$,
$\Mo(X)=Y^*\widetilde{\CHI X}Y$ where $Y$ is an isometry from $\hi$ into $L^2_\mi(G)$ for which 
$$
YV_g=T_gY,\qquad g\in G,
$$
and the set $\set{\CHI X Y \psi}{X\in\bo G,\,\psi\in\hi}$ is total in $L^2_\mi(G)$.
\end{theorem}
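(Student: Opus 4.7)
The plan is to first cast $\Mo$ into the Naimark form of Theorem~\ref{thrm:peruslause}(1) with $\mu$ equal to a Haar measure, then use the essential uniqueness of the minimal Naimark dilation together with covariance to build a system of imprimitivity on the dilation space, and finally invoke Mackey's imprimitivity theorem to identify that system with the canonical one on $L^2_\mi(G)$.

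The first step is to verify that $\Mo$ is absolutely continuous with respect to the Haar measure $\d g$. Choose a faithful sequence $(\rho_n)\subset\sh$ and set $\mu:=\sum_n 2^{-n}\rho_n^\Mo$. Covariance gives
\begin{equation*}
\mu(gX)=\sum_n 2^{-n}\tr{(V_g^\ast\rho_n V_g)\Mo(X)},
\end{equation*}
and since $\{V_g^\ast\rho_n V_g\}_n$ is still faithful, $\mu$ and $\mu(g\,\cdot)$ have the same null sets. Thus $\mu$ is a $\sigma$-finite quasi-invariant measure on $G$, and hence equivalent to Haar measure. Applying Theorem~\ref{thrm:peruslause}(1) with $\mu=\d g$ produces an isometry $Y_0\colon\hi\to\mi^\oplus=\int_G^\oplus\mi(g)\,\d g$ satisfying $\Mo(X)=Y_0^\ast\widetilde{\CHI X}Y_0$, the minimality clause, and $\dim\mi(g)\le\dim\hi$.

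Next, we use covariance to promote this into a covariant dilation. For each fixed $g\in G$ the computation
\begin{equation*}
(Y_0V_g)^\ast\widetilde{\CHI{gX}}(Y_0V_g)=V_g^\ast\Mo(gX)V_g=\Mo(X)
\end{equation*}
shows that $(Y_0V_g,\widetilde{\CHI{g\,\cdot}})$ is another minimal Naimark dilation of $\Mo$, since $X\mapsto\widetilde{\CHI{gX}}$ is just a relabeling of the canonical spectral measure. The essential uniqueness of such a dilation up to a decomposable unitary yields a unitary $U_g$ on $\mi^\oplus$ with $U_gY_0=Y_0V_g$ and $U_g\widetilde{\CHI X}U_g^\ast=\widetilde{\CHI{gX}}$ for all $X\in\bo G$. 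A cocycle-and-continuity argument, exploiting strong continuity of $V$ and the Borel structure available because $G$ is separable and of type~I, upgrades $g\mapsto U_g$ to a strongly continuous unitary representation of $G$.

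The pair $(U,\widetilde{\CHI\cdot})$ is then a system of imprimitivity for $G$ based on the transitive $G$-space $G$ itself, and Mackey's imprimitivity theorem identifies it unitarily with the canonical system on $L^2_\mi(G)$ for a uniquely determined separable multiplicity space $\mi$, in such a way that $U_g$ is sent to $T_g$ of~\eqref{eq:tildeV} and $\widetilde{\CHI X}$ is sent to multiplication by $\CHI X$. Post-composing $Y_0$ with this unitary identification delivers the desired isometry $Y\colon\hi\to L^2_\mi(G)$ with $\Mo(X)=Y^\ast\widetilde{\CHI X}Y$ and $YV_g=T_gY$. The bound $\dim\mi\le\dim\hi$ is inherited from $\dim\mi(g)\le\dim\hi$, and the totality of $\{\CHI X Y\psi: X\in\bo G,\,\psi\in\hi\}$ in $L^2_\mi(G)$ is inherited from the minimality of the original Naimark dilation. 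The principal obstacle is the imprimitivity step: producing $U_g$ measurably in $g$ and invoking Mackey's theorem in the required form depends on the type-I assumption on $G$ and the nontrivial Borel structure on $\widehat G$, which is precisely why these blanket hypotheses are imposed at the outset of the section.
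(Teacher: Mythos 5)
Your argument is correct and is exactly the standard one: quasi-invariance of the measures $\rho^{\Mo}$ forces $\Mo\ll\d g$, uniqueness of the minimal Naimark dilation yields the strongly continuous representation $U$ making $(U,\widetilde{\CHI X})$ a transitive system of imprimitivity over $G$, and Mackey's theorem identifies this with the canonical system on $L^2_\mi(G)$ --- which is precisely the proof of \cite[Proposition 2]{Cattaneo79}, cited by the paper in lieu of giving its own argument. The only inaccuracy is your closing remark: since the stabilizer here is trivial, the imprimitivity step needs nothing beyond second countability and local compactness of $G$, and the type-I and unimodularity hypotheses together with the Borel structure of $\hat G$ are required only for the Plancherel--Fourier analysis carried out after Theorem \ref{thrm:covaNaima}, not for the theorem itself.
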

Immediately we see that now $\hd$ of Theorem \ref{thrm:peruslause} is $L^2_\mi(G)$ and thus any covariant POVM is of constant rank. More specifically, 
$\rank\Mo=\dim\mi.$
Next we determine the generalized coherent states $d_k(g)$ of theorem \ref{thrm:peruslause} for a covariant POVM $\Mo$ with the isometry $Y$ of Theorem \ref{thrm:covaNaima}.

By using the unitary Fourier-Plancherel transform
$\mathcal{F}$ of the representation $T$ and defining $W:=\mathcal{F}  Y$, $U_g:=\mathcal{F}  T_g \mathcal{F}^*$, and  $\Po(X):=\mathcal{F}\widetilde{\CHI X}\mathcal{F}^*$, we have
\begin{itemize}
\item $\Mo(X)=W^* \Po(X) W,\hspace{1cm}X\in\bo G,$
\item $W V_g= U_g W,\hspace{1cm}g\in G.$
\end{itemize}
Since $L^2_\mi(G)\cong L^2(G)\otimes\mi$ we may write the Fourier-Plancherel transform $\hF$ of the form $\hF=\hF_0\otimes I_\mi$ where
$\hF_0$ is the Fourier-Plancherel transform from $L^2(G)$ onto
the direct integral
$$
\HS=\int_\G^\oplus[\mathcal{K}(\la)\otimes\mathcal{K}(\la)^*]\d\nu(\la),
$$ where $\mathcal{K}(\la)^*$ is the dual of $\mathcal{K}(\la)$ and, hence,
$\mathcal{K}(\la)\otimes\mathcal{K}(\la)^*$ is isomorphic to the Hilbert
space of Hilbert-Schmidt operators\footnote{Via $\varphi\otimes\eta^*\mapsto\kb\eta\varphi$ where $\varphi,\,\eta\in\ki(\la)$ and $\eta^*=\<\eta|\in\ki(\la)^*$.}
 on $\mathcal{K}(\la)$ and
$$
\tilde\hi:=\HS\otimes\mi=\hF L^2_\mi(G)=\int_\G^\oplus\underbrace{[\mathcal{K}(\la)\otimes\mathcal{K}(\la)^*\otimes\mathcal{M}]}_{=:\;\tilde\hi(\la)}\d\nu(\la).
$$
Recall that, for each integrable $\psi\in L^2(G)$ one has 
$(\hF_0\psi)(\la)=\int_G \la_g\psi(g)\d g$ for (almost) all $\lambda\in\G$. In addition,
$(\hF_0^*\fii)(g)=\int_\G\tr{\fii(\la)\la_g^*}\d\nu(\la)$ where $\fii\in\HS$ is such that $\int_\G\|\fii(\la)\|_{\rm tr}\d\nu(\la)<\infty$
and $\|\,\cdot\,\|_{\rm tr}$ is the trace norm. Let $\{b_k\}_{k=1}^{\dim\mi}$ be an orthonormal basis of $\mi$.
Since any $\tilde\fii\in\hi$ can be written in the form $\tilde\fii=\sum_{k=1}^{\dim\mi}\fii_k\otimes b_k$, $\fii_k\in\HS$,
we may define 
$$
\|\tilde\fii\|_{\tilde\hB}:=\|\tilde\fii\|_{\tilde\hi}+
\int_\G \sqrt{\sum_k\|\fii_k(\la)\|_{\rm tr}^2}\d\nu(\la)
$$ 
and a Banach space
$
\tilde\hB:=\set{\tilde\fii\in\tilde\hi}{\|\tilde\fii\|_{\tilde\hB}<\infty}
$
which is dense in $\tilde\hi$.

It is easy to verify that, for
$$
U_g:=
\int_\G^\oplus[\la_g\otimes I_{\mathcal{K}(\la)^*}\otimes I_{\mathcal{M}}]\d\nu(\la)\in\hU(\tilde\hi),
$$ 
the condition $U_g\hF=\hF T_g$ holds and the restriction $U_g\big|_{\tilde\hB}:\,\tilde\hB\to\tilde\hB$ is continuous. 
Now the isometry $W:\,\hi\to\tilde\hi$ is decomposable, that is,
\begin{equation}
W=\int_\G^\oplus\big[I_{\mathcal{K}(\la)}\otimes W(\la)\big]\d\nu(\la),
\end{equation}
where $W(\la):\,\mathcal{L}(\la)\to \mathcal{K}(\la)^*\otimes\mathcal{M}$
is an isometry for $\nu$-almost all $\la\in\G$ \cite[Lemma 1]{HoPe09} and, for all $g\in G$, 
\begin{align*}
W V_g
&=\int_\G^\oplus\big[I_{\mathcal{K}(\la)}\otimes W(\la)\big]\big[\la _{g}\otimes I_{\mathcal{L}{(\la )}}\big]\d\nu(\la ) \\
&=\int_\G^\oplus\big[\la_g\otimes I_{\mathcal{K}(\la)^*}\otimes I_{\mathcal{M}}\big]\big[I_{\mathcal{K}(\la)}\otimes W(\la)\big]\d\nu(\la)
=U_g W.
\end{align*}

In the spirit of remark \ref{rem:antidual}, we define a Banach space
$$
\hB:=\Big\{\psi\in\hi\,\Big|\,W\psi\in\tilde\hB\Big\}
$$ 
equipped with the norm $\|\psi\|_\hB:=\|W\psi\|_{\tilde \hB}$, $\psi\in\hB$, so that $W\big|_\hB:\,\hB\to\tilde\hB$ is trivially continuous.
Now $\hB$ is dense in $\hi$ (and thus contains an orthonormal basis $\bf h$ of $\hi$) and, for each $g\in G$, $\|V_g\psi\|_\hB=\|\psi\|_\hB$ for all $\psi\in\hB$, so that the restriction $V_g\big|_\hB:\,\hB\to\hB$ is continuous.

Let $k\in\Z_{\dim\mi}$, $g\in G$, and $\psi\in\hB$. 
Since $YV_g=T_gY$ and $Y=\hF^*W$, according to \eqref{eq:tarkea}, we define
$$
\<d_k(g)|\psi\>:=\<b_k|(\hF^*WV_g^*\psi)(e)\>_{\mi}
$$
where $e$ is the unit of $G$.
Indeed, since $V_g\big|_\hB:\,\hB\to\hB$ and $W\big|_\hB:\,\hB\to\tilde\hB$ are continuous
one needs only study innerproducts
$
\<b_k|(\hF^*\fii)(e)\>_{\mi}
$
where $\fii\in\tilde\hB$:
Let $\fii=\sum_{k=1}^{\dim\mi}\fii_k\otimes b_k\in\tilde\hB$ where $\fii_k\in\HS$ are thus such that
\begin{eqnarray*}
\|\fii\|_{\tilde\hB}=
\sqrt{\sum_{k}\int_\G\tr{\fii_k(\la)^*\fii_k(\la)}\d\nu(\la)}+\int_\G \sqrt{\sum_{k}\|\fii_k(\la)\|_{\rm tr}^2}\d\nu(\la)
<\infty.
\end{eqnarray*}
Now $\<b_k|(\hF^*\fii)(e)\>_{\mi}=(\hF_0^*\fii_k)(e)=\int_\G\tr{\fii_k(\la)}\d\nu(\la)$
so that
$$
\big|\<b_k|(\hF^*\fii)(e)\>_{\mi}\big|\le\int_\G|\tr{\fii_k(\la)}|\d\nu(\la)\le
\int_\G\|\fii_k(\la)\|_{\rm tr}\d\nu(\la)\le\|\fii\|_{\tilde\hB}
$$
and, hence, the linear mappings 
$
\hB\ni\psi\mapsto\<d_k(g)|\psi\>\in\C
$
are continuous and their conjugate (antilinear) maps
$
d_k(g):\,\hB\to\C,\;\psi\mapsto
\<\psi|d_k(g)\>:=\overline{\<d_k(g)|\psi\>}
$
belongs to $\hB^\times$. To conclude, we have
\begin{theorem}
For any covariant POVM $\Mo$ there exist a dense Banach space $\hB\subseteq\hi$ for which $V_g\big|_\hB:\,\hB\to\hB$ is continuous for all $g\in G$ and linearly independent vectors  $d_k\in\hB^\times$, $k\in\Z_{\rank\Mo}$, such that
$$
\<\fii|\Mo(X)\psi\>=\sum_{k=1}^{\rank\Mo} \int_X  \<V_g^*\fii|d_k\>\<d_k|V_g^*\psi\>\d g,\hspace{0.5cm}\fii,\,\psi\in\hB,\;X\in\bo G
$$
(a minimal covariant diagonalization of $\Mo$). 
\end{theorem}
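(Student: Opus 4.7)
The plan is to set $d_k := d_k(e)$, where the functionals $d_k(g)\in\hB^\times$ were already constructed in the paragraphs preceding the statement, and then reduce the theorem to an application of part (2) of Theorem \ref{thrm:peruslause} combined with the covariance of the Naimark dilation furnished by Theorem \ref{thrm:covaNaima}.

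First I would note that essentially all the ambient structure is already in place. The Banach space $\hB=\{\psi\in\hi : W\psi\in\tilde\hB\}$ has been shown to be dense in $\hi$ and $V_g$-invariant with $V_g|_\hB:\hB\to\hB$ continuous for every $g\in G$; the functionals $\psi\mapsto\<d_k(g)|\psi\>=\<b_k|(\hF^*WV_g^*\psi)(e)\>_\mi$ have been shown to lie in $\hB^\times$; and by Theorem \ref{thrm:covaNaima} the POVM $\Mo$ is absolutely continuous with respect to Haar measure $\d g$ and of constant rank $\rank\Mo=\dim\mi$. Thus Theorem \ref{thrm:peruslause}(2) applies with $\Omega=G$, $\mu=\d g$, and multiplicity $\dim\mi(g)\equiv\dim\mi$, yielding
\begin{equation*}
\<\fii|\Mo(X)\psi\>=\int_X\sum_{k=1}^{\rank\Mo}\<\fii|d_k(g)\>\<d_k(g)|\psi\>\d g,\qquad \fii,\psi\in\hB,
\end{equation*}
with the sum having a fixed finite (or countably infinite) number of terms and hence interchangeable with the integral by Tonelli. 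The linear independence of the $d_k=d_k(e)$ is precisely the pointwise linear independence asserted by Theorem \ref{thrm:peruslause}(2), evaluated at $g=e$.

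The only real step is then the covariance identity
\begin{equation*}
\<d_k(g)|\psi\>=\<d_k|V_g^*\psi\>, \qquad g\in G,\ \psi\in\hB.
\end{equation*}
Granting this, substitution in the displayed formula immediately produces the claimed minimal covariant diagonalization. The identity itself is an unwinding of the definition: combining $WV_g=U_gW$, $U_g=\hF T_g\hF^*$, and the translation law $(T_g^*\chi)(e)=\chi(g)$ for $\chi\in L^2_\mi(G)$, one obtains
\begin{equation*}
(\hF^*WV_g^*\psi)(e)=(T_g^*\hF^*W\psi)(e)=(Y\psi)(g),
\end{equation*}
so that $\<d_k(g)|\psi\>=\<b_k|(Y\psi)(g)\>_\mi$, which in particular for $g=e$ matches \eqref{eq:tarkea}. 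Applying the same formula with $V_g^*\psi$ in place of $\psi$ and $g$ replaced by $e$ gives $\<d_k(e)|V_g^*\psi\>=\<b_k|(YV_g^*\psi)(e)\>_\mi=\<b_k|(Y\psi)(g)\>_\mi$, which is the right-hand side of the identity.

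The main subtlety, and where I expect to spend most of the care, is precisely this covariance computation: although it is formally a one-line chase through the intertwining diagrams $YV_g=T_gY$ and $WV_g=U_gW$, the meaning of ``evaluation at $e$'' is distributional and requires the continuity of $V_g|_\hB:\hB\to\hB$ and $W|_\hB:\hB\to\tilde\hB$ already established, so that pulling $V_g^*$ across $W$ and pairing with $b_k$ at the identity are legitimate operations on $\hB$. Once this is in place, the rest of the proof is bookkeeping.
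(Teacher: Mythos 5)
Your proposal is correct and follows essentially the same route as the paper: the paper's argument is precisely the preceding construction of $\tilde\hB$, $\hB$, and the functionals $\<d_k(g)|\psi\>:=\<b_k|(\hF^*WV_g^*\psi)(e)\>_\mi$, together with the bound $|\<b_k|(\hF^*\fii)(e)\>_\mi|\le\|\fii\|_{\tilde\hB}$ showing $d_k(g)\in\hB^\times$, after which the theorem is read off from Theorem \ref{thrm:peruslause}(2) via \eqref{eq:tarkea} exactly as you do. Your explicit verification of the covariance identity $\<d_k(g)|\psi\>=\<d_k(e)|V_g^*\psi\>$ is in fact immediate from the paper's definition of $d_k(g)$ (which already has $V_g^*$ acting first), so nothing further is needed.
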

From Theorem \ref{thrm:peruslause} we see that, if $\rank\Mo<\infty$ one gets a simple characterization of extremality in terms of coherent states: $\Mo$ is extremal if and only if, for any (essentially) bounded $\mu$-measurable functions $\lambda_{kl}:\,\Omega\to\C$,
$$
\sum_{k,l=1}^{\rank\Mo} \int_\Omega\lambda_{kl}(g) \<V_g^*\psi|d_k\>\<d_l|V_g^*\psi\>\d g=0,\hspace{0.5cm}\psi\in\hB,
$$
implies $\lambda_{kl}=0$ for all $k,\,l$. Next we consider an easy special case.

\begin{example}[\bf Abelian group]\rm
Assume that $G$ is Abelian group. Now $\G$ is the dual group consisting of characters $\lambda:\,G\to\T$ with the (dual) Haar measure
$\d\lambda$.
Hence $\d\nu(\lambda)=\d\lambda$ and, since any irreducible representation is one dimensional, $\ki(\la)\cong\C$, one has $\hi(\la)\cong\li(\la)$ in
$\hi=\int_\G^\oplus\H\d\lambda$. Moreover, 
$
g\mapsto V_g=\int_\G^\oplus V_g(\lambda)\d\lambda
$
where $V_g(\lambda)\psi(\lambda)=\lambda(g)\psi(\lambda)$ for $\d\lambda$-almost all $\lambda\in\G$.
By using the Fourier-Plancherel transform $\mathcal{F}:\,L_\mi^2(G)\to L_\mi^2(\G)$
(i.e.\ $(\mathcal{F}\fii)(\lambda)=\int_G\lambda(g)\fii(g)\d g$) and a spectral measure $X\mapsto\Po(X)=\mathcal{F}  \widetilde{\CHI X}  \mathcal{F}^*$,
one gets for any covariant POVM $\Mo:\,\bo G\to\lh$ that $\Mo(X)=W^* \Po(X) W$ where the isometry
$
W:\,\hi\to\tilde\hi\cong L^2_{\mathcal{M}}(\G)
$
is decomposable (i.e.\
$
W=\int_{\Lambda}^\oplus W(\lambda)\d\lambda
$
where the operators $W(\lambda):\,\H\to\mathcal{M}$ are isometries). 
In addition,
$$
\|\psi\|_{\hB}=\sqrt{\int_\G\|\psi(\la)\|_{\hi(\la)}^2\d\la}+\int_\G\|\psi(\la)\|_\H\d\la
$$
and $\hB=\set{\psi\in\hi}{\|\psi\|_{\hB}<\infty}$.
Hence, for all $\fii,\,\psi\in\hB$ and $X\in\bo G$,
\begin{align*}
\<\fii|\Mo(X)\psi\>&=\<W\fii|\Po(X)W\psi\> \\
&= \int_X\int_\G\int_\G \lambda(g)\ov{\lambda'(g)}
\<W(\lambda)\fii(\lambda)|W(\lambda')\psi(\lambda')\>_{\mi}\d\lambda\d\lambda' \d g \\
&=
\int_X\<\hW V_g^*\fii|\hW V_g^*\psi\>_{\mi} \d g
\end{align*}
where 
$
{\mathcal W}:\,\hB\to{\mathcal M},\;\psi\mapsto{\mathcal W}\psi:=(\hF^*W\psi)(e)=\int_\G W(\lambda)\psi(\lambda)\d\lambda
$
is linear and continuous (since $\|\hW\psi\|_{\mathcal M}\le\|\psi\|_\hB$).
Since $(Y\psi)(g)=\hW V_g^*\psi$ for all $\psi\in\hB$ and (almost) all $g\in\G$ and the set $\set{\CHI X Y \psi}{X\in\bo G,\,\psi\in\hi}$ is total in $L^2_\mi(G)$ one sees that
$\hW\hB$ is total in $\mi$. 

Let $\{b_k\}_{k=1}^{\rank\Mo}$ be an orthonormal basis of $\mi$.
Using the fact that the restrictions $V_g\big|_{\hB}$ are continuous on $\hB$ one can define, for all $g\in G$ and $k\le\rank\Mo$, a continuous linear map $\<d_k(g)|:\,\hB\to\C$ by
$$
\<d_k(g)|\psi\>:=\<b_k|(\hF^*WV_g^*\psi)(e)\>_\mi=\<b_k|\hW V_g^*\psi\>_{\mi},\qquad\psi\in\hB,
$$
so that $d_k(g)\in\hB^\times$ where $[d_k(g)](\psi)\equiv\<\psi|d_k(g)\>:=\overline{\<d_k(g)|\psi\>}$ (see, remark \ref{rem:antidual}).
Note that $\<d_k(g)|\psi\>=\<b_k|(Y\psi)(g)\>_{\mi}$ (see, \eqref{eq:tarkea}) and
$
\<d_k(g)|\psi\>=\<d_k(e)|V_g^*\psi\>.
$
To conclude, we have a minimal diagonalization for $\Mo$:
$$
\<\fii|\Mo(X)\psi\>=\sum_{k=1}^{\rank\Mo}\int_X  \<V^*_g\fii|d_k(e)\>\<d_k(e)|V^*_g\psi\>\d g,\hspace{0.5cm}\fii,\,\psi\in\hB.
$$

For example, when $G\cong\R^n$ (and hence $\hat G\cong\R^n$) one has $\lambda(g)=e^{i\boldsymbol{\lambda}\cdot\mathbf{g}}$ where we have identified $\lambda\in\hat G$ and $g\in G$ with $\boldsymbol{\la}\in\R^n$ and $\mathbf{g}\in\R^n$, respectively. The Haar measures $\d\lambda$ and $\d g$ are now the scaled $n$-dimensional Lebesgue measures $\d\boldsymbol{\la}$ and $(2\pi)^{-n}\d\mathbf{g}$, respectively.
Identifying $G$ with the {\it position space}  and choosing $\hi(\lambda)\equiv\hi_s$
we get the {\it covariant position observables} $\Mo:\,\bo{\R^n}\to L^2_{\hi_s}(\G)$ of a nonrelativistic particle with a spin $s$ (i.e.\ $2s+1=\dim\hi_s$) in the {\it momentum space} $\hat G$. For the canonical position observable $\Qo$, $\mi=\hi_s$, $\rank\Qo=2s+1$, $W=I_{L^2_{\hi_s}(\G)}$, and
\begin{eqnarray*}
\<\fii|\Qo(X)\psi\>&=&
\int_X\int_\G\int_\G e^{i(\boldsymbol{\lambda}-\boldsymbol{\lambda}')\cdot\mathbf{g}}
\<\fii(\boldsymbol{\lambda})|\psi(\boldsymbol{\lambda}')\>_{\mi}\d\boldsymbol{\lambda}\d\boldsymbol{\lambda}' \frac{\d\mathbf{g}}{(2\pi)^n} \\
&=&\sum_{k=1}^{2s+1}
\int_X\int_\G\int_\G e^{i(\boldsymbol{\lambda}-\boldsymbol{\lambda}')\cdot\mathbf{g}}
\<\fii(\boldsymbol{\lambda})|b_k\>_\mi\<b_k|\psi(\boldsymbol{\lambda}')\>_{\mi}\d\boldsymbol{\lambda}\d\boldsymbol{\lambda}' \frac{\d\mathbf{g}}{(2\pi)^n} \\
&=&
\sum_{k=1}^{2s+1}\int_X  \<\fii|d_k(\mathbf{g})\>\<d_k(\mathbf{g})|\psi\>\frac{\d\mathbf{g}}{(2\pi)^n},\hspace{0.5cm}\fii,\,\psi\in\hB,\;X\in\bo G,
\end{eqnarray*}
where the generalized coherent state $d_k(\mathbf{g})\in\hB^\times$ is defined by
$$
\<d_k(\mathbf{g})|\psi\>=
\int_\G e^{-i\boldsymbol{\lambda}\cdot\mathbf{g}}\<b_k|\psi(\boldsymbol{\lambda})\>_{\mi}\d\boldsymbol{\lambda}
=\<b_k|(\hF^*\psi)(\mathbf{g})\>,\hspace{0.5cm}\psi\in\hB.
$$
In the position representation, $\<d_k(\mathbf{g})|$ is the evaluation $\<b_k|\fii(\mathbf{g})\>$ of $\<b_k|\fii\>$ at the point $\mathbf{g}$ which is well-defined, e.g., for the elements $\fii$ of the Schwarz space of rapidly decreasing functions. Then $\fii\mapsto\fii(\mathbf{g})$ is just the (vector valued) Dirac $\delta$-distribution at $\mathbf{g}$.
As a spectral measure, $\Qo$ is extremal.
\end{example}

\section*{Acknowledgements}

This work was supported by the Academy of Finland grant no. 138135.

\end{document}